 \newtheorem{thm}{Theorem}[section]
 \theoremstyle{definition}
 \newtheorem{defn}[thm]{Definition}
 \theoremstyle{remark}
 \newtheorem{rem}[thm]{Remark}
 \newtheorem{ex}[thm]{Example}
 \numberwithin{equation}{section}
\def \pdtu{\frac{\partial u}{\partial t}}
\begin{document}
%
%
%
%
%
%
%
%
%
\title[Stochastic Incompleteness]
 {Stochastically Incomplete Manifolds and Graphs}
\author[Rados\l aw K. Wojciechowski]{Rados{\l}aw Krzysztof Wojciechowski}

\address{%
Grupo de F\'isica-Matem\'atica \\
Complexo Interdisciplinar da Universidade de Lisboa \\
Av. Prof. Gama Pinto, 2 \\
PT--1649--003 Lisboa, Portugal}

\email{radoslaw@cii.fc.ul.pt}

\thanks{This work was completed with the support of FCT grant SFRH/BPD/45419/2008 and FCT project PTDC/MAT/101007/2008.}

\subjclass{Primary 39A12; Secondary 58J65}

\keywords{Stochastic incompleteness, explosion, heat kernel, manifolds, graphs, curvature, volume growth}

\date{October 1, 2009}


\begin{abstract}
We survey geometric properties which imply the stochastic incompleteness of the minimal diffusion process associated to the Laplacian on manifolds and graphs.  In particular, we completely characterize stochastic incompleteness for spherically symmetric graphs and show that, in contrast to the case of Riemannian manifolds, there exist examples of stochastically incomplete graphs of polynomial volume growth.
\end{abstract}

\maketitle
\section{Introduction}

A diffusion process whose lifetime is almost surely infinite is said to be  \emph{stochastically complete} (or \emph{conservative} or \emph{non-explosive}).  If this fails to occur, that is, if the total probability of the particle undergoing the diffusion to be in the state space is less than one at some time, the process is said to be \emph{stochastically incomplete}.   A trivial way for stochastic incompleteness to occur is to impose a killing boundary condition.  It is the objective of this article to survey the geometric properties, in the case when no such killing condition is present, that cause stochastic incompleteness to occur for the minimal diffusion process associated to the Laplacian on manifolds and graphs.  We draw heavily from the survey article of A. Grigor{\cprime}yan \cite{Gri99}  for the case of Riemannian manifolds and then present some recent results for graphs.  

Examples of geodesically complete but stochastically incomplete manifolds were first given by R. Azencott \cite{Az74}.  Specifically, if $M$ is a geodesically complete, simply connected, negatively curved, analytic Riemannian manifold and $k(r)$ denotes the smallest, in absolute value, sectional curvature at distance $r$, then Azencott showed that $M$ is stochastically incomplete if $k(r) \geq C r^{2+ \epsilon}$ for $\epsilon > 0$ \cite{Az74}*{Proposition 7.9}.  In these examples, the large negative curvature pushes the particle to infinity in a finite time and explosion occurs.  Furthermore, Azencott showed that such a manifold is stochastically complete if the sectional curvature is uniformly bounded from below.  In 1978, S.T. Yau showed that a geodesically complete manifold whose Ricci curvature is bounded from below is stochastically complete \cite{Yau78}.  A different type of criterion for stochastic completeness in terms of the volume growth was given by Grigor{\cprime}yan in 1986.  In particular, Grigor{\cprime}yan's result implies that, if $V(r) \leq e^{cr^2}$, where $V(r)$ denotes the volume of a geodesic ball of radius $r$, then a geodesically complete manifold is stochastically complete \cite{Gri86}.   The examples of Azencott, or the case of model manifolds, show that Grigor{\cprime}yan's criterion is sharp.  

The corresponding question for graphs was explicitly addressed by J. Dodziuk and V. Mathai in 2006.  By analyzing bounded solutions of the heat equation, they show that graphs of bounded valence are stochastically complete \cite{DodMat06}*{Theorem 2.10}.  Examples of a stochastically incomplete graph were given in \cites{Woj07, Woj09}.  These examples are trees branching rapidly in all directions from a fixed vertex.  More specifically, letting $k_+(r)$ denote the minimum number of outward pointing edges, where the minimum is taken over all vertices on a sphere of radius $r$ in a tree, then the diffusion is stochastically incomplete if  $\sum_{r=0}^\infty \frac{1}{k_+(r)} < \infty$ \cite{Woj09}*{Theorem 3.4}.  Therefore, in the case of graphs, the number of outward pointing edges plays the role of the negative sectional curvature in sweeping the particle out to infinity. The volume growth for such trees is factorial and, while smaller, at least comparable to the examples of Azencott.  In this article, we give many more examples of stochastically incomplete graphs.  In particular, in Theorem \ref{sgraphs}, we completely characterize the stochastic incompleteness of spherically symmetric graphs and use this to give examples of stochastically incomplete graphs with only polynomial volume growth.  Thus, in the case of graphs, no direct analogue of Grigor{\cprime}yan's theorem holds.

\section{Stochastic Incompleteness}
In this section we give an overview of properties equivalent to stochastic incompleteness.  Here, the manifold and graph settings are quite analogous so we do not distinguish between the two.  To avoid trivial examples, we assume that all manifolds are geodesically complete and that all graphs are infinite.  Furthermore, we assume that all underlying spaces are connected. 

We start by outlining the construction of the heat kernel.  In both cases, one has a Laplacian acting on a dense subset of the space of $L^2$ functions.  We choose our sign convention so that the Laplacian is a positive operator.  Therefore, in the case of $\mathbb{R}^n$, we take $\Delta = - \sum_{i=1}^n \frac{\partial^2}{\partial x_i^2}$, while, for graphs, if $f$ is a function on the vertices, then, pointwise, the Laplacian acts by $\Delta f(x) = \sum_{y \sim x} (f(x) - f(y) )$, where $y \sim x$ indicates that the vertices $x$ and $y$ form an edge.  We note that this sign convention is consistent with the one followed in \cites{DodMat06, Woj09} but opposite of \cite{Gri99}.  

In both cases, the Laplacian is essentially self-adjoint and one uses the functional calculus to define a semigroup of operators $e^{-t\Delta}$.  The action of the semigroup is given by a kernel $p_t(x,y)$, henceforth called the \emph{heat kernel}, so that, for $t>0$ and $x \in M$
\[ e^{-t \Delta} u_0(x) = \int_M p_t(x,y) u_0(y) dy \]
where $u_0$ is a continuous, bounded function.  

Alternatively, the heat kernel can be constructed by an exhaustion argument.  In this construction, one takes a sequence of increasing subsets of the whole space, defines the heat kernel with Dirichlet boundary conditions for each subset in the exhaustion, and then passes to the limit.  This is the approach taken in the case of manifolds in \cite{Dod83} and for graphs in \cite{Woj07, Web08} and the equivalence of the two constructions is demonstrated there.  

The heat kernel is positive, symmetric, satisfies the semigroup property and has total integral less than or equal to 1.  In particular, $p_t(x,y)$ gives the transition density for a diffusion process on the underlying space, referred to as the \emph{minimal diffusion process associated to the Laplacian} (see \cite{Az74, Chu60, Gri99}).

\begin{defn}
The minimal diffusion process associated to the Laplacian is \emph{stochastically incomplete} if
\[ \int_M p_t(x,y) dy < 1 \]
for some (equivalently, all) $t>0$ and $x\in M$.
\end{defn} 
\noindent We will follow convention and say that the space is stochastically incomplete if this is the case.  Note that, another way of writing this is $e^{-t \Delta} \mathbf{1} < \mathbf{1}$ where $\mathbf{1}$ indicates the constant function whose value is 1. 

The following theorem gives some equivalent formulations of stochastic incompleteness.  In particular, stochastic incompleteness is equivalent to the non-uniqueness of bounded solutions of the heat equation.  These criteria originate in the works of  R.Z. Has{\cprime}minski{\u\i} \cite{Has60} and W. Feller \cite{Fel54} who studied the question for one-dimensional diffusions.  For a full historical overview and proof in the case of manifolds see \cite{Gri99}*{Theorem 6.2}, for the case of graphs \cite{Woj09}*{Theorem 3.1}, for more general operators which arise as generators of regular Dirichlet forms on discrete sets with an arbitrary measure of full support \cite{KelLen09}*{Theorem 1}.

\begin{thm} \label{stochasticinc}
The following statements are equivalent:
\begin{enumerate}
\item[(1)]  $\int_M p_t(x,y) dy < 1$ for some (equivalently, all) $t>0$ and $x\in M$.   
\item[(2)]  For every $\lambda > 0$, there exists a bounded, positive function $v$ satisfying  \\ $(\Delta + \lambda) v = 0.$
\item[(3)]  For every $\lambda > 0$, there exists a bounded, non-negative, non-zero function $v$ satisfying $(\Delta + \lambda) v \leq 0.$
\item[(4)]  There exists a non-zero, bounded function $u: M \times (0,\infty) \to \mathbb{R}$ satisfying 
\[ \left\{  
\begin{array}{ll} \Delta u(x,t) + \pdtu(x,t) = 0 & \textrm{ for } x \in M, \ t > 0\\
 \lim_{t \to 0^+} u(\cdot, t) \equiv 0. 
 \end{array} \right.  \]
\end{enumerate}
\end{thm}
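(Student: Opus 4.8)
The plan is to establish the two loops $(1)\Rightarrow(2)\Rightarrow(3)\Rightarrow(1)$ and $(1)\Rightarrow(4)\Rightarrow(1)$, using the resolvent $R_\lambda = (\Delta+\lambda)^{-1} = \int_0^\infty e^{-\lambda t} e^{-t\Delta}\, dt$ together with the two auxiliary functions $w(x,t) = 1 - e^{-t\Delta}\mathbf{1}(x)$ and $v_\lambda = \mathbf{1} - \lambda R_\lambda \mathbf{1} = \lambda \int_0^\infty e^{-\lambda t} w(\cdot,t)\, dt$ as the bridges between the probabilistic, parabolic, and elliptic pictures. Both satisfy $0 \le w, v_\lambda \le 1$, and the governing observation is that stochastic incompleteness is exactly the statement $w \not\equiv 0$, equivalently $v_\lambda \not\equiv 0$ for some (hence every) $\lambda$; the ``some $\Leftrightarrow$ all $t$'' clause in (1) I would dispatch separately from the monotonicity of $t \mapsto e^{-t\Delta}\mathbf{1}$ and the semigroup property.

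For $(1)\Rightarrow(4)$ I would check that $u = w$ works: since $\Delta\mathbf{1}=0$ and $e^{-t\Delta}\mathbf{1}$ solves the heat equation, $w$ solves $\Delta w + \partial_t w = 0$, is bounded by $1$, tends to $0$ as $t\to 0^+$ because $e^{-t\Delta}\mathbf{1}\to\mathbf{1}$, and is non-zero precisely under stochastic incompleteness. For $(1)\Rightarrow(2)$ I would take $v = v_\lambda$: an integration by parts in $t$ gives $(\Delta+\lambda)R_\lambda\mathbf{1} = \mathbf{1}$, so $(\Delta+\lambda)v_\lambda = \lambda\mathbf{1} - \lambda\mathbf{1} = 0$, while $v_\lambda$ is bounded, non-negative, and non-zero; strict positivity then follows from a minimum principle, since at an interior zero of a non-negative solution the identity $\Delta v_\lambda = -\lambda v_\lambda$ forces all neighbours (on a manifold, a whole neighbourhood via Harnack) to vanish, whence $v_\lambda\equiv 0$ by connectedness, a contradiction. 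The step $(2)\Rightarrow(3)$ is immediate.

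The heart of the matter is $(3)\Rightarrow(1)$, which I would prove contrapositively through a comparison principle resting on the exhaustion construction. Assuming stochastic completeness, i.e.\ $v_\lambda\equiv 0$ (equivalently $\lambda R_\lambda\mathbf{1} = \mathbf{1}$), fix an exhaustion $\{M_k\}$ with Dirichlet resolvents $R_\lambda^{(k)}\uparrow R_\lambda$ and put $h_k = \mathbf{1} - \lambda R_\lambda^{(k)}\mathbf{1}$, which solves $(\Delta+\lambda)h_k = 0$ in $M_k$, equals $1$ on $\partial M_k$, and decreases to $v_\lambda$. For a bounded $v\ge 0$ with $(\Delta+\lambda)v\le 0$ and $M = \sup v$, the function $v - M h_k$ satisfies $(\Delta+\lambda)(v-Mh_k)\le 0$ in $M_k$ and is $\le 0$ on $\partial M_k$; the maximum principle for $\Delta+\lambda$ with $\lambda>0$ (at an interior positive maximum $x_0$ one has $\Delta(v-Mh_k)(x_0)\ge 0$, hence $(\Delta+\lambda)(v-Mh_k)(x_0)>0$, a contradiction) yields $v\le Mh_k$ on $M_k$. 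Letting $k\to\infty$ gives $v\le Mv_\lambda = 0$, so $v\equiv 0$; thus a non-zero $v$ forces stochastic incompleteness.

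To close with $(4)\Rightarrow(1)$, I would argue contrapositively once more. Applied to $\pm v$, the same comparison shows that under stochastic completeness the only bounded solution of $(\Delta+\lambda)v = 0$ is $v\equiv 0$. Given a bounded $u$ solving the heat equation with zero initial data, take the Laplace transform $v(x) = \int_0^\infty e^{-\lambda t}u(x,t)\, dt$; interchanging $\Delta$ with the integral and integrating by parts in $t$, with boundary terms vanishing because $u$ is bounded and $u(\cdot,0)=0$, gives $(\Delta+\lambda)v = 0$, so $v\equiv 0$ for every $\lambda$, and injectivity of the Laplace transform forces $u\equiv 0$. I expect the main obstacle to be the rigorous justification of the comparison step: the minimality of $R_\lambda$ together with the convergence $h_k\downarrow v_\lambda$, the maximum principle up to $\partial M_k$ (routine on graphs but requiring care with the regularity of $\partial M_k$ and the strong maximum principle on manifolds), and the interchange of limits, differentiation, and integration in the Laplace-transform identities, all of which rely on the uniform boundedness of $w$, $v_\lambda$, and $u$.
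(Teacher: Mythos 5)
The paper itself offers no proof of this theorem: it is quoted from the literature, with the manifold case deferred to \cite{Gri99}*{Theorem 6.2}, the graph case to \cite{Woj09}*{Theorem 3.1}, and the general Dirichlet-form case to \cite{KelLen09}*{Theorem 1}, so there is no in-paper argument to compare against. Your proposal is correct and is essentially the standard proof found in those references: the bridge functions $w = \mathbf{1} - e^{-t\Delta}\mathbf{1}$ and $v_\lambda = \mathbf{1} - \lambda R_\lambda \mathbf{1}$, the exhaustion-plus-comparison (maximum principle) argument for $(3)\Rightarrow(1)$, and the Laplace transform reduction of $(4)$ to the elliptic equation are exactly the devices used in Grigor{\cprime}yan's proof and its graph analogue in \cite{Woj09}.
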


\begin{rem} \label{massiverem}
The condition (3) is sometimes referred to by saying that $M$ is $\lambda$-\emph{massive}.  More generally, an open subset $\Omega$ of  $M$ is called $\lambda$-\emph{massive}, if there exists a bounded, non-negative function $v$ satisfying $(\Delta + \lambda)v \leq 0$ on $M$ such that $v_{| M \setminus \Omega} \equiv 0$ and v is non-zero on $\Omega$.  By a maximum principle argument, it is easy to see that this property is preserved by enlarging $\Omega$ or by removing a compact subset from $\Omega$ \cite{Gri99}*{Proposition 6.1}.  Therefore, $M$ is stochastically incomplete if it contains a $\lambda$-massive subset and, furthermore, stochastic incompleteness is preserved under the operation of removing a compact subset from $M$.
\end{rem}

\begin{rem}
For another formulation of stochastic completeness in terms of a weak form of the Omori-Yau maximum principle see \cite{PigRigSet03}*{Theorem 1.1}.
\end{rem}

\section{Stochastically Incomplete Manifolds}
In this section we survey some of the known examples of stochastically incomplete manifolds.  As mentioned in the introduction, the discovery of such examples goes back to the work of Azencott.  In particular, letting $K(r)$ and $k(r)$ denote the maximal and minimal absolute values of the sectional curvatures at distance $r$ from a fixed point on a negatively curved, simply connected, analytic manifold $M$, Azencott \cite{Az74}*{Proposition 7.9} proved that: 
\begin{itemize}
\item[(i)] If $\frac{1}{r}\int_0^r K(s) ds \leq C$ for $r$ large, then $M$ is stochastically complete.
\item[(ii)] If $k(r) \geq C r^{2+ \epsilon}$ for $\epsilon>0$, then M is stochastically incomplete. 
\end{itemize}
These results are achieved by applying criteria for the explosion time of the diffusion in terms of the coefficients of the operator.  

In \cite{Yau78}, the heat kernel is analyzed and, in particular, it is shown that, if the Ricci curvature of $M$ is bounded from below, then $\int_M p_t(x,y) dy = 1$.  This was reproven by using a maximum principle argument to show uniqueness of bounded solutions of the heat equation under the same assumption in \cite{Dod83}*{Theorem 4.2}.  Therefore, any manifold whose Ricci curvature is bounded from below is stochastically complete.  This was extended by P. Hsu who showed that, if $\kappa(r)$ denotes any function satisfying $\kappa^2(r) \geq -\inf_{x \in B_r} \textup{Ric}(x)$, then $\int^\infty \frac{1}{\kappa(r)} dr = \infty$ implies stochastic completeness \cite{Hsu89}.   Related results were also given by K. Ichihara \cite{Ich82}*{Theorem 2.1} by comparing with the case of model manifolds, and M. Murata \cite{Mur95}*{Theorem A} who studied the uniqueness of non-negative solutions of the heat equation.   

A different type of criterion for stochastic completeness was given by M.P. Gaffney in \cite{Gaf59}.  Letting $r(x)$ denote the distance to a fixed reference point, Gaffney proves that $M$ is stochastically complete if $e^{-cr(\cdot)}$ is integrable on $M$ for all positive constants $c$.  This gives rise to a volume criterion for stochastic completeness.  If $V(r)$ denotes the Riemannian volume of a geodesic ball in $M$, then L. Karp and P. Li, in an unpublished article, showed that $V(r) \leq e^{cr^2}$ implies stochastic completeness by studying solutions of the heat equation \cite{KarLi}.  A better volume growth condition given by Grigor{\cprime}yan states that if 
\begin{equation} \label{Gri}
\int^\infty \frac{r}{\textup{log } V(r)} dr =  \infty, 
\end{equation}
then $M$ is stochastically complete \cite{Gri86}, see also \cite{Gri99}*{Theorem 9.1}.  This criterion was extended to the more general setting of local Dirichlet spaces by K.T. Sturm \cite{St94}*{Theorem 4}.

That Grigor{\cprime}yan's criterion (\ref{Gri}) is sharp can be seen by considering the case of \emph{spherically symmetric} or \emph{model} manifolds $M_\sigma$.  These are manifolds, diffeomorphic to  $\mathbb{R}^n = \mathbb{R}_+ \times S^{n-1}$, which, following the removal of some number of points, have well-defined polar coordinates $(r, \theta_1,\ldots , \theta_{n-1})$, and whose Riemannian metric is given by $g = dr^2 + \sigma^2(r) g_{S^{n-1}}$.   Here, $g_{S^{n-1}}$ denotes the standard Euclidean metric on $S^{n-1}$ and $\sigma$ is a smooth function satisfying $\sigma(0)=0$ and $\sigma^\prime(0) =1.$   In particular, the area of a geodesic sphere is given by $S(r) =  \omega_n \sigma^{n-1}(r)$ where $\omega_n$ is the area of the sphere in $\mathbb{R}^n$.  See \cite{Gri99}*{Section 3} or \cite{GreWu79} for details.  It can be shown \cite{Gri99}*{Corollary 6.8} that model manifolds are stochastically complete if and only if
\begin{equation}\label{model}
\int^\infty \frac{V(r)}{S(r)} dr = \infty.
\end{equation} 
In particular, if one chooses $\sigma(r)$ so that $V(r) \geq e^{r^{2+\epsilon}}$ for $\epsilon>0$, then $M_\sigma$ is stochastically incomplete.

\begin{rem}
Grigor{\cprime}yan asked \cite{Gri99}*{Problem 9 on page 238} if the condition (\ref{model}) could replace (\ref{Gri}) in implying stochastic completeness for general manifolds.  In a recent paper, C. B\"ar and G.P. Bessa give examples of connected sums of model manifolds which satisfy (\ref{model}) but are stochastically incomplete \cite{BarBes09}*{Theorem 1.3}.
\end{rem}

\section{Stochastically Incomplete Graphs} 
We now begin our discussion of stochastically incomplete graphs.  We consider $G = (V,E)$ an infinite, locally finite, connected graph with vertex set $V$ and edge set $E \subset V \times V$.  We do not consider the case of multiple edges or loops.  We use the notation $x \sim y$ to indicate that the vertices $x$ and $y$ form an edge and say that $x$ and $y$ are \emph{adjacent} or \emph{neighbors} if this is the case.  We let $m(x) = | \{ y \ | \ y \sim x \}|$ denote the \emph{valence} or \emph{degree} of $x$, that is, the number of neighbors of $x$.  

We equip the graph with the usual metric, that is, $d(x,y)$, the \emph{distance} between the vertices $x$ and $y$, is defined as the number of edges in the shortest path connecting $x$ and $y$.  We then fix a vertex $x_0$ and let $S_r = S_r(x_0) = \{ x \ | \ d(x,x_0) = r \}$ denote the \emph{sphere of radius $r$} about $x_0$.  We let $S(r)$ denote the \emph{area} of a sphere of radius $r$, defined as the number of vertices in $S_r$, and $V(r) = \sum_{i=0}^r S(i)$ denote the \emph{volume} of a ball of radius $r$, in analogy with the case of Riemannian manifolds.   

We consider real-valued functions on the vertices of $G$ and, if $f$ is such a function, we define the Laplacian by
\[ \Delta f(x) = \sum_{y \sim x} \big( f(x) - f(y) \big) = m(x) f(x) - \sum_{y \sim x} f(y). \]
As mentioned previously, $\Delta$ is positive and essentially self-adjoint  on the space of finitely supported functions on $V$, which is a dense subset of the Hilbert space of square summable functions on $V$.  Furthermore, the Laplacian is a bounded operator if and only if $m(x) \leq K$ for all vertices $x$.  See, for example, \cites{Dod84, DodMat06, Kel09, Web08, Woj07} for some of the basic properties of the Laplacian.

\begin{rem} \label{renormalized Laplacian remark}
Many authors (e.g. \cite{MedSet00}) consider a different Laplacian which acts on functions on vertices by $\widehat{\Delta} f(x) = f(x) - \frac{1}{m(x)} \sum_{y \sim x} f(y)$ and is related to our Laplacian through $\widehat{\Delta} = \frac{1}{m(\cdot)}\Delta$.  The operator $\widehat{\Delta}$ is bounded on the space of square summable functions, with a weighted inner product, and it can be shown that $e^{-t \widehat{\Delta}} \mathbf{1} = \mathbf{1}$ for all $t>0$, where $\mathbf{1}$ denotes the function which is 1 on all vertices of $G$.  In particular, the minimal diffusion process associated to $\widehat{\Delta}$ is always stochastically complete \cite{Woj07}.  See \cite{Kel09} for other differences between $\Delta$ and $\widehat{\Delta}$.
\end{rem} 

In \cite{DodMat06}, Dodziuk and Mathai study the heat equation using the maximum principle as developed in \cite{Dod83}.  They show that, if $m(x) \leq K$ for all vertices $x$, bounded solutions of the heat equation on $G$ are uniquely determined by initial data.  In particular, all such graphs are stochastically complete.  

This result was extended in two ways.  In \cite{Dod06}, Dodziuk applied this technique to a Laplacian with weights.  More specifically, consider a weighted graph, that is, a graph where each edge $x \sim y$ is assigned a positive, symmetric weight $a_{x, y}$.  The resulting weighted Laplacian $A$ acts on functions by $Af(x) = \sum_{y \sim x} a_{x,y} \big( f(x) - f(y) \big).$  In \cite{Dod06}*{Theorem 4.1}, it is shown that, if $m(x) \leq K_1$ for all vertices $x$ and $a_{x, y} \leq K_2$ for all edges $x \sim y$, then the heat equation involving this Laplacian has unique bounded solutions and, as such, $e^{-t A} \mathbf{1} = \mathbf{1}$ for all $t \geq 0$.  

In \cite{Web08}, A. Weber replaced the assumption $m(x) \leq K$ with a different curvature condition on the graph.  Specifically, letting, $r(x) = d(x, x_0)$ where $x_0$ is a fixed reference vertex, if follows that $\Delta r(x) \geq -C$ for $C \geq 0$ implies that the graph is stochastically complete \cite{Web08}*{Corollary 4.15}.  To give a geometric interpretation to this last result, let $m_{\pm}(x) = | \{ y  \ | \ y \sim x \textup{ and }   r(y) = r(x) \pm 1 \} |$ denote the number of vertices one step further and closer, respectively, from $x_0$ then is $x$.   It follows that $\Delta r(x) \geq -C$ if and only if $m_+(x) - m_-(x) \leq C.$  Hence, in particular, a graph will be stochastically complete if it is not expanding too much, relative to the number of incoming edges, in all directions. 

 All these results were obtained by applying a maximum principle to study bounded solutions of the heat equation.  Furthermore, the geometric assumptions are imposed at each vertex of the graph.  However, the stochastic completeness or incompleteness of a graph should be determined by geometric conditions at infinity as in the Riemannian setting.

In \cites{Woj07, Woj09} a different technique is applied.  Specifically, the same question is approached through the study of positive solutions to the difference equation $(\Delta + \lambda) v(x) = 0$ for a positive constant $\lambda$.  By (2) in Theorem \ref{stochasticinc},  stochastic incompleteness is equivalent to the boundedness of the solution $v$.  This fact was used to obtain a general criterion for stochastic completeness extending the result of Dodziuk and Mathai.   Specifically, fixing a vertex $x_0$ and letting $K(r) = \max_{x \in S_r(x_0)} m(x)$, Theorem 3.2 in \cite{Woj09} states that, if $\sum_{r=0}^\infty \frac{1}{K(r)} = \infty$, then $G$ is stochastically complete.  We now sharpen this result as follows: we let $m_+(x) = | \{ y \ | \ y \sim x \textup{ and } r(y) = r(x) + 1 \} |$ as above, and let $K_+(r) = \max_{x \in S_r} m_+(x)$.
\begin{thm} \label{completenesstheorem}
If 
\[ \sum_{r=0}^{\infty} \frac{1}{K_+(r)} = \infty, \]
then $G$ is stochastically complete.
\end{thm}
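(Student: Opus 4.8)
The plan is to argue via criterion (2) of Theorem \ref{stochasticinc}: since stochastic incompleteness is equivalent to the existence, for every $\lambda > 0$, of a bounded positive $v$ solving $(\Delta + \lambda)v = 0$, it suffices to show that the hypothesis $\sum_r 1/K_+(r) = \infty$ forces every positive solution to be unbounded. Writing the equation pointwise as $(m(x)+\lambda)v(x) = \sum_{y \sim x} v(y)$, I would work with the spherical maxima $M(r) = \max_{x \in S_r} v(x)$ and aim for the recursive lower bound $M(r+1) \geq \bigl(1 + \tfrac{\lambda}{K_+(r)}\bigr) M(r)$.

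First I would record the qualitative behavior of a positive solution. Because $\Delta v = -\lambda v < 0$ wherever $v > 0$, the function $v$ is strictly subharmonic, so at every vertex $x$ one has $v(x) < \max_{y \sim x} v(y)$; in particular $v$ admits no local maximum. A maximum-principle argument then shows that, writing $B_r = \{x : r(x) \leq r\}$, the quantity $\max_{B_r} v$ is attained on $S_r$ (an interior maximizer would have all neighbors in $B_r$ and thus be a local maximum) and is strictly increasing in $r$ (the maximizer on $S_r$ has a strictly larger neighbor, which must lie in $S_{r+1}$). Hence $M(r) = \max_{B_r} v$ and $M(r-1) \leq M(r)$.

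The heart of the argument is the following. Fix $r$ and a vertex $x^* \in S_r$ with $v(x^*) = M(r)$, and split its neighbors into those in $S_{r-1}$, $S_r$, $S_{r+1}$, with counts $m_-, m_0, m_+$. Bounding each block of the sum $\sum_{y \sim x^*} v(y)$ by the corresponding spherical maximum, the defining equation gives $(m_- + m_0 + m_+ + \lambda)M(r) \leq m_- M(r-1) + m_0 M(r) + m_+ M(r+1)$. The lateral ($m_0$) terms cancel, and the inner terms are absorbed using $M(r-1) \leq M(r)$, leaving $(m_+ + \lambda)M(r) \leq m_+ M(r+1)$. Since $m_+(x^*) \geq 1$ (the strictly larger neighbor from the previous step lies in $S_{r+1}$) and $m_+(x^*) \leq K_+(r)$, this rearranges to the desired recursion. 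Iterating yields $M(r) \geq M(0)\prod_{j=0}^{r-1}\bigl(1 + \tfrac{\lambda}{K_+(j)}\bigr)$, and since $\sum_j 1/K_+(j) = \infty$ the product diverges, so $M(r) \to \infty$, contradicting the boundedness of $v$.

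The main obstacle—and precisely the reason this sharpens the earlier criterion phrased with $K(r) = \max_x m(x)$ to one involving only the outward degree $K_+(r)$—is the bookkeeping in this last step: the estimate must be arranged so that the contributions of the inner and lateral neighbors are completely neutralized (the former by the monotonicity of $M$, the latter by exact cancellation), leaving the growth driven solely by the outward-pointing edges. I expect the delicate points requiring care to be the monotonicity and no-local-maximum claims, together with the degenerate checks that $S_{r+1} \neq \emptyset$ (from local finiteness and infiniteness of $G$) and $m_+(x^*) > 0$, all of which must be secured before the product estimate can be invoked.
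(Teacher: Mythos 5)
Your proposal is correct and takes essentially the same route as the paper: both invoke criterion (2) of Theorem \ref{stochasticinc}, track the spherical maxima of a positive solution of $(\Delta+\lambda)v=0$, evaluate the equation at a maximizing vertex on each sphere so that lateral terms cancel and inner terms are absorbed by monotonicity, and conclude that the outward degree bound $K_+(r)$ forces the growth $M(r+1)\geq\bigl(1+\tfrac{\lambda}{K_+(r)}\bigr)M(r)$, whose iteration diverges under the hypothesis. The only cosmetic differences are that the paper establishes monotonicity of the spherical maxima by induction within the same recursion (rather than by a separate ball maximum principle) and sums the additive increments $w(r+1)-w(r)>\tfrac{\lambda}{K_+(r)}w(0)$ instead of taking the product.
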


\begin{proof}
Let $v>0$ satisfy $(\Delta + \lambda) v(x) = 0$ for $\lambda > 0$ and $x \in G$.  Let $w(r) = \max_{x \in S_r} v(x)$.  Then, $(\Delta + \lambda) v(x_0) =0$ implies that $\sum_{y \sim x_0} \big( v(y) - v(x_0) \big) = \lambda v(x_0)$.  Therefore, as $w(0) = v(x_0)$,
\[ K_+(0) \big( w(1) - w(0) \big) \geq \sum_{y \sim x_0} \big( v(y) - w(0) \big) = \lambda w(0) \]
so that 
\begin{equation}\label{firstjump}
w(1) - w(0) \geq \frac{\lambda}{K_+(0)}  w(0). 
\end{equation}

Now, choose $x_r \in S_r$ such that $w(r) = v(x_r)$.  Then, $(\Delta + \lambda) v(x_r) = 0$ implies that
\begin{equation} \label{nextjumps}
 \sum_{\substack{y \sim x_r \\ y \in S_{r+1}} }\big( v(y) - v(x_r) \big) = \sum_{\substack{ y \sim x_r \\ y \not \in S_{r+1}}} \big( v(x_r) - v(y) \big) + \lambda v(x_r). 
\end{equation}
Using (\ref{firstjump}) and (\ref{nextjumps}), it follows by induction that $w(r+1) - w(r) > 0$ for all $r \geq 0$.  Therefore,
\begin{align*}
K_+(r) \big( w(r+1) - w(r) \big) & \geq  \sum_{\substack{y \sim x_r \\ y \in S_{r+1}} }\big( v(y) - w(r) \big)   \\
& = \sum_{\substack{ y \sim x \\ y \not \in S_{r+1}} } \big( w(r) - v(y) \big) + \lambda w(r) > \lambda w(r). 
\end{align*}
This implies that
\[ w(r+1) - w(r) > \frac{\lambda}{K_+(r)}  w(r) > \frac{\lambda}{K_+(r)} w(0). \]
Therefore, $\sum_{r=0}^\infty  \frac{1}{K_+(r)} = \infty$ implies $\sum_{r=0}^\infty \big( w(r+1) - w(r) \big) = \infty$ so that $w$, and, therefore, $v$, is unbounded.
\end{proof}

\begin{rem}
For trees, we have the following complementary result: if $k_+(r) = \min_{x \in S_r} m_+(x) > 0$, then the tree is stochastically incomplete if $\sum_{r=0}^\infty \frac{1}{k_+(r)} < \infty$ \cite{Woj09}*{Theorem 3.4}.  In particular, in the case of \emph{spherically symmetric trees} $T_k$, that is, when $k(r) = k_+(r) = K_+(r)$ denotes the branching number of $T_k$, we have that $T_k$ is stochastically incomplete if and only if 
\[ \sum_{r=0}^\infty \frac{1}{k(r)} < \infty. \]
\end{rem}

\subsection{Stochastically Incomplete Subgraphs} \label{subgraphs section}
As mentioned in Remark \ref{massiverem}, $\lambda$-massiveness is preserved under the operation of removing a compact (or finite, in the graph case) subset.  In particular, this can be used to show that a graph which contains a stochastically incomplete subgraph with only finitely many vertices adjacent to vertices that are not in the subgraph is stochastically incomplete.  That is, if $G \subset \widetilde{G}$ with $G$ stochastically incomplete and $\partial G = \{ x \in G \ | \ \exists \ y \sim x \textup{ with } y \not \in G \}$ a finite set, then $\widetilde{G}$ is stochastically incomplete.  

This result was extended by M. Keller and D. Lenz in \cite{KelLen09}*{Theorem 3} where they give a general condition for stochastic incompleteness of $\widetilde{G}$ in terms of $G$ by considering the Dirichlet Laplacian on $G \subset \widetilde{G}$.  In fact, it should be pointed out, that Keller and Lenz consider processes associated to much more general operators of the form $L = B + C$, where $B$ is a weighted Laplacian and $C$ is a potential,  with $L$ acting on an $\ell^2$ space with an arbitrary measure of full support; furthermore,  their graphs are not necessarily locally finite, but we specialize their results to our setting.  In particular, they show that, if $G \subset \widetilde{G}$ is a stochastically incomplete subgraph and $\sup_{x \in \partial G} | \{ y \sim x \ | \ y \not \in G \} | < \infty$, then $\widetilde{G}$ is stochastically incomplete.

Moreover, they show that every stochastically incomplete graph $G$ is a subgraph of a stochastically complete graph $\widetilde{G}$ \cite{KelLen09}*{Theorem 2}.  They construct the supergraph $\widetilde{G}$ by attaching, to each vertex $x$ in $G$, $m(x)d(x)$ copies of the graph $\mathbb{W}$ with vertex set $\{0, 1, 2, \ldots \}$ and edge set $\{ (n, n+1) \ | \ n=0,1,2, \ldots \}$.  Here, $m(x)$ denotes the valence of the vertex $x$ in $G$ and attaching means that the vertex $x$ is associated to the vertex 0 in $\mathbb{W}$.  They show that, if $d(x)$ is chosen so that $\sum_{r=0}^\infty \frac{1}{d(x_r)} = \infty$ for every sequence of vertices such that $x_r \sim x_{r+1}$,  then $\widetilde{G}$ is stochastically complete.  

We now show that this result is optimal by analyzing the stochastic completeness of the supergraph constructed this way in the case of spherically symmetric trees.  As mentioned above, for spherically symmetric trees, $k(r) = k_+(r) = K_+(r)$ and they are stochastically incomplete if and only if $\sum_{r=0}^\infty \frac{1}{k(r)} < \infty.$  Such trees are determined by the function $k$ and we denote them by $T_k$.  Now, instead of attaching the graphs $\mathbb{W}$ as in \cite{KelLen09}, we connect $\tilde{k}(r)$ \emph{end} vertices, that is, vertices of valence 1, to each vertex $x \in S_r \subset T_k$.  As Keller and Lenz point out, this construction has an equivalent effect (see, Remark \ref{constants} below).  We denote the resulting graph $T_k^{\tilde{k}}$.  Therefore, each $x \in T_k \subset T_k^{\tilde{k}}$ in $S_r$ has $k(r)$ neighbors in $S_{r+1} \subset T_k$ and $\tilde{k}(r)$ end vertex neighbors in $T_k^{\tilde{k}} \setminus T_k$.

\begin{thm} \label{doubletrees}
$T_k^{\tilde{k}}$ is stochastically incomplete if and only if
\[ \sum_{r=0}^\infty \frac{\tilde{k}(r) + 1}{k(r)} < \infty. \]
\end{thm}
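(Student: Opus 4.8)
The plan is to apply the equivalence (1)$\iff$(2) of Theorem \ref{stochasticinc}: $T_k^{\tilde{k}}$ is stochastically incomplete exactly when, for some (equivalently every) $\lambda>0$, the equation $(\Delta+\lambda)v=0$ admits a bounded positive solution $v$. The first step is to eliminate the end vertices. If $z$ is a leaf attached to a vertex $x$, then $z$ has valence one, so $(\Delta+\lambda)v(z)=0$ reads $(1+\lambda)v(z)=v(x)$, i.e. $v(z)=v(x)/(1+\lambda)$. Substituting this into the equation at a tree vertex $x\in S_r\cap T_k$, whose neighbours are one parent, $k(r)$ children in $T_k$, and $\tilde{k}(r)$ leaves, collapses the leaf terms and produces an effective equation on $T_k$ alone:
\[
\Big[1+k(r)+\lambda+\frac{\lambda\tilde{k}(r)}{1+\lambda}\Big]v(x)=v(\mathrm{parent})+\sum_{\substack{y\sim x\\ y\in S_{r+1}}}v(y).
\]
Thus the leaves serve only to inflate the zeroth-order coefficient from $\lambda$ to $\frac{\lambda(1+\lambda+\tilde{k}(r))}{1+\lambda}$. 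Setting $c(r)=\frac{\lambda(1+\lambda+\tilde{k}(r))}{(1+\lambda)k(r)}$, one checks $\frac{\lambda}{1+\lambda}\cdot\frac{\tilde{k}(r)+1}{k(r)}\le c(r)\le \lambda\cdot\frac{\tilde{k}(r)+1}{k(r)}$, so that $\sum_r c(r)<\infty$ if and only if $\sum_r\frac{\tilde{k}(r)+1}{k(r)}<\infty$ for every $\lambda>0$; this is what makes the final criterion independent of $\lambda$.

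For the direction that divergence of the sum forces completeness, I would mimic the proof of Theorem \ref{completenesstheorem} essentially verbatim, working with $w(r)=\max_{x\in S_r\cap T_k}v(x)$ for an arbitrary positive solution. Choosing $x_r$ with $v(x_r)=w(r)$ and bounding its parent and children values by $w(r-1)$ and $w(r+1)$ in the effective equation, the same inductive argument gives $w(r+1)-w(r)>0$ and
\[
w(r+1)-w(r)\ \ge\ c(r)\,w(r)\ \ge\ c(r)\,w(0).
\]
Hence $\sum_r\frac{\tilde{k}(r)+1}{k(r)}=\infty$ yields $\sum_r\big(w(r+1)-w(r)\big)=\infty$, so $w$, and therefore $v$, is unbounded; no bounded positive solution exists and $T_k^{\tilde{k}}$ is stochastically complete.

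For the converse I would construct an explicit spherically symmetric solution: set $v(0)=1$, put $v(x)=v(r)$ on $S_r\cap T_k$ and $v(z)=v(r)/(1+\lambda)$ on the attached leaves, and read off from the effective equation (together with the root equation, in which the parent term is absent) the recursion
\[
k(r)\,\delta(r)=\frac{\lambda(1+\lambda+\tilde{k}(r))}{1+\lambda}\,v(r)+\delta(r-1),\qquad \delta(r)=v(r+1)-v(r),\ \ \delta(-1)=0.
\]
Since every term is positive, induction gives $\delta(r)>0$, so the constructed $v$ is positive and increasing and solves $(\Delta+\lambda)v=0$ at every vertex; only its boundedness remains. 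Solving the recursion explicitly gives $\delta(r)=\sum_{s=0}^{r}\frac{c(s)v(s)}{\prod_{j=s+1}^{r}k(j)}$, and summing in $r$ yields
\[
v(R+1)=v(0)+\sum_{s=0}^{R}c(s)v(s)\sum_{r=s}^{R}\frac{1}{\prod_{j=s+1}^{r}k(j)}\ \le\ v(0)+C\sum_{s=0}^{R}c(s)v(s),
\]
where the inner geometric-type sums are uniformly bounded by a constant $C$. This bound is the crux: convergence of $\sum\frac{\tilde{k}(r)+1}{k(r)}$ forces $k(r)\to\infty$, so $k(j)\ge 2$ eventually and each inner sum is dominated by a convergent geometric series. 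Applying the discrete Gr\"onwall inequality to $v(R+1)\le v(0)+C\sum_{s=0}^{R}c(s)v(s)$ then gives $v(R)\le v(0)\exp\!\big(C\sum_s c(s)\big)<\infty$ exactly when $\sum_r\frac{\tilde{k}(r)+1}{k(r)}<\infty$, producing a bounded positive solution and hence stochastic incompleteness.

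I expect the main obstacle to be precisely closing this boundedness estimate. The naive inequality $V\le v(0)+C\big(\sum_s c(s)\big)V$ for $V=\sup_r v(r)$ only yields $V<\infty$ when $C\sum_s c(s)<1$, which need not hold merely from finiteness of the sum; one must therefore either split off a large-radius tail on which the remaining sum is small, or keep the radius-dependent form above and invoke discrete Gr\"onwall as indicated. Establishing the uniform bound $C$ on the inner sums---equivalently, that $1/k(j)$ is eventually summable geometrically---is the one place where the hypothesis enters in an essential, non-formal way.
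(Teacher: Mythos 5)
Your proposal is correct, and its skeleton matches the paper's proof of Theorem \ref{doubletrees}: the leaf elimination $v(\tilde{y})=v(x)/(1+\lambda)$, the reduction to a radial recursion (the paper's (\ref{increments}), which is identical to your $k(r)\delta(r)=k(r)c(r)v(r)+\delta(r-1)$ once one writes $\alpha\tilde{k}(r)+\lambda=k(r)c(r)$), and the completeness direction via $w(r+1)-w(r)\geq c(r)w(r)\geq c(r)w(0)$ are all the same; the only cosmetic difference there is that you take maxima over spheres as in Theorem \ref{completenesstheorem}, while the paper averages over spheres and works with a genuinely radial function in both directions. Where you take a genuinely different route is the boundedness (incompleteness) direction, and the comparison is instructive. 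The paper disposes of it in one line: since the increments are positive, $w(r)-w(r-1)<w(r)$, so the recursion immediately yields the multiplicative bound $w(r+1)<\bigl(1+\frac{\alpha\tilde{k}(r)+\lambda+1}{k(r)}\bigr)w(r)$, and iterating gives a product $\prod_{i}\bigl(1+\frac{\alpha\tilde{k}(i)+\lambda+1}{k(i)}\bigr)$ that converges exactly when $\sum_{r}(\tilde{k}(r)+1)/k(r)<\infty$ --- no explicit solution formula, no Gr\"onwall, and no need to know $k(r)\to\infty$. You instead solve the recursion in closed form, prove a uniform bound on the inner sums $\sum_{r\geq s}\prod_{j=s+1}^{r}k(j)^{-1}$ (this is where you must extract $k(r)\to\infty$ from the hypothesis, using $\tilde{k}(r)\geq 0$), and close with discrete Gr\"onwall; that argument is complete and correct, and your worry about the naive global bound is legitimate and properly resolved by the Gr\"onwall step. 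But the obstacle you flag as ``the crux'' is an artifact of your route: crediting the previous increment $\delta(r-1)$ against $w(r)$ itself, as the paper does, makes the entire issue disappear and shortens the boundedness direction to a single estimate.
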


\begin{proof}
Let $v>0$ satisfy $(\Delta + \lambda)v(x)=0$ for $\lambda > 0$ and $x \in T_k^{\tilde{k}}$.   For every $\tilde{y} \in T_k^{\tilde{k}} \setminus T_k$ there exists a unique $x \sim \tilde{y}$ such that $x \in T_k$.   Then, $(\Delta + \lambda)v(\tilde{y})= \big( v(\tilde{y}) - v(x) \big) + \lambda v(\tilde{y}) = 0$, implies that $v(\tilde{y}) = \left( \frac{1}{1+\lambda} \right) v(x).$  In particular, for every $\tilde{y} \in T_k^{\tilde{k}} \setminus T_k$ with $x \sim \tilde{y}$ and $x \in T_k$, we have
\begin{equation}\label{differences}
v(x) - v(\tilde{y}) = \left( \frac{\lambda}{1+ \lambda} \right) v(x) = \alpha v(x)
\end{equation}
for $\alpha = \frac{\lambda}{1+\lambda}$.

Now, by averaging over spheres, it suffices to consider functions $v$ whose value depends only on the distance from $x_0$.  We denote the restriction of $v$ to $T_k$ by $w$.  Then, $(\Delta + \lambda) v(x_0) = 0$ and (\ref{differences}) imply that $k(0) \big( w(0) - w(1) \big) + \big(\alpha \tilde{k}(0) + \lambda \big) w(0) =0.$  Therefore,
\begin{equation} \label{firstincrement}
w(1)-w(0) = \left( \frac{\alpha \tilde{k}(0) + \lambda}{k(0)} \right) w(0). 
\end{equation}
If $x \in S_r \subset T_k$ for $r > 0$, then $(\Delta+\lambda)v(x) = k(r) \big( w(r)-w(r+1) \big) + \big( w(r) - w(r-1) \big) + (\alpha \tilde{k}(r) + \lambda) w(r) = 0.$  Therefore,
\begin{equation}\label{increments}
w(r+1) - w(r) = \left( \frac{\alpha \tilde{k}(r) + \lambda}{k(r)} \right) w(r) + \frac{1}{k(r)} \big( w(r) - w(r-1) \big).
\end{equation}

Applying (\ref{firstincrement}) and (\ref{increments}), it follows by induction that $w(r+1) - w(r) > 0$ for all $r\geq0$.  Therefore, the increment $w(r+1)- w(r)$ can be estimated as follows:
\[ \left( \frac{\alpha \tilde{k}(r) + \lambda} {k(r)} \right) w(r) \leq w(r+1) - w(r) < \left( \frac{\alpha \tilde{k}(r) + \lambda + 1}{k(r)} \right) w(r). \]
Therefore, 
\[ \left(1 + \frac{\alpha \tilde{k}(r) + \lambda} {k(r)} \right) w(r) \leq w(r+1)  < \left(1+ \frac{\alpha \tilde{k}(r) + \lambda + 1}{k(r)} \right) w(r) \]
and iterating this down to $r= 0$ gives
\[  \prod_{i=0}^r \left(1 + \frac{\alpha \tilde{k}(i) + \lambda} {k(i)} \right) w(0) \leq w(r+1)  < \prod_{i=0}^r \left(1+ \frac{\alpha \tilde{k}(i) + \lambda + 1}{k(i)} \right) w(0). \]
It follows that $T_k^{\tilde{k}}$ is stochastically incomplete if and only if
 \[ v \textup{ is bounded } \Leftrightarrow  w \textup{ is bounded } \Leftrightarrow \prod_{r=0}^\infty \left( 1 + \frac{\tilde{k}(r) + \lambda} {k(r)} \right) < \infty \Leftrightarrow \sum_{r=0}^\infty \frac{\tilde{k}(r) + 1}{k(r)} < \infty. \]
\end{proof}

\begin{ex} \label{completionex}
As an example, we let $k(r)=(r+1)^2$ and $\tilde{k}(r) = r+1$ so that $T_k$ is stochastically incomplete and $T_k \subset T_k^{\tilde{k}}$ with $T_k^{\tilde{k}}$ complete.  Note that, on $T_k^{\tilde{k}}$, $K_+(r) = (r+1)^2 + (r+1)$ while $k_+(r)=0$ for $r>0$, so that neither of our general results concerning the stochastic completeness of trees apply in this case.
\end{ex}

\begin{rem}
If $\tilde{k}(r)=0$ for all $r \geq 0$, then we recover the result for spherically symmetric trees mentioned previously, that is, $T_k$ is stochastically incomplete if and only if $\sum_{r=0}^\infty \frac{1}{k(r)} < \infty.$
\end{rem}

\begin{rem}\label{constants}
The only difference between this construction, where we connect end vertices to each vertex in $T_k$, and the one given in \cite{KelLen09}, where one attaches a path to infinity, is in the constant $\alpha = \frac{\lambda}{1+ \lambda}$ in (\ref{differences}).  Namely, when one attaches a path to a vertex $x \in T_k$, by identifying that vertex with the vertex 0 in the graph $\mathbb{W}$, then a calculation with difference equations gives that 
\[ v(x) - v(1) = \frac{ \lambda + \sqrt{\lambda(\lambda + 2)} } {\lambda + 2 +  \sqrt{\lambda(\lambda+2)} } v(x) = \beta v(x) \]
and one replaces the constant $\alpha$ in (\ref{differences}) with the constant $\beta$.  We mention this fact to make it clear, as Keller and Lenz do, that it is not necessary to add end vertices to a graph in order to construct the complete supergraph.
\end{rem}

\subsection{Spherically Symmetric Graphs}
We now give a necessary and sufficient condition for the stochastic incompleteness of spherically symmetric graphs and illustrate the result with several examples.  We consider graphs with a vertex $x_0$ such that $m_{\pm}(x) = | \{ y \ | \ y \sim x \textup{ and } r(y) = r(x) \pm 1 \} |$ where $r(y) = d(y,x_0)$, depend only on the distance between $x$ and $x_0$.  We write $k_{\pm}(r)$ for the common values of $m_{\pm}(x)$ on $S_r$.  We call such graphs \emph{spherically symmetric} and denote them by $G_{k_\pm}$.   For the special case of spherically symmetric trees $T_k$ discussed previously, it follows that $k_+(r) = k(r)$ and $k_-(r) =1$.  

Note, that we make no assumptions concerning $m_0(x) = | \{ y \ | \ y \sim x  \textup{ and }  r(y) = r(x) \} |$ and it will become clear in the course of the proof of the next theorem that these quantities play no role in the stochastic incompleteness of $G_{k_\pm}$.  This is somewhat surprising as one might expect that by choosing $m_0(x)$ large and making the spheres highly connected one might slow the diffusion down, but this is not the case and we explain this below.   We recall that $S(r)$ denotes the number of vertices in the sphere of radius $r$ about $x_0$, while $V(r)$ denotes the number of vertices in the ball of radius $r$, so that, $V(r) = \sum_{i=0}^r S(i).$ 

\begin{thm}\label{sgraphs}
$G_{k_\pm}$ is stochastically incomplete if and only if
\[ \sum_{r=0}^\infty \frac{V(r)}{k_+(r)S(r)} < \infty. \]
\end{thm}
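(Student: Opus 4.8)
The plan is to use the characterization of stochastic incompleteness from condition (2) of Theorem \ref{stochasticinc}: $G_{k_\pm}$ is stochastically incomplete if and only if, for every $\lambda > 0$, there is a bounded positive $v$ with $(\Delta + \lambda)v = 0$. First I would reduce to radial functions. Given any positive solution $v$, form its spherical average $\bar v(x) = \frac{1}{S(r)}\sum_{y \in S_r} v(y)$ for $x \in S_r$ and set $w(r) = \bar v|_{S_r}$. Summing the equation $(\Delta+\lambda)v = 0$ over $S_r$ and observing that every edge internal to a sphere contributes an antisymmetric pair that cancels, one sees that the quantities $m_0(x)$ drop out entirely and $w$ satisfies the recursion
\[ k_+(r)\big(w(r+1) - w(r)\big) = k_-(r)\big(w(r) - w(r-1)\big) + \lambda w(r), \quad r \geq 1, \]
with the boundary relation $k_+(0)\big(w(1) - w(0)\big) = \lambda w(0)$ at the root. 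Since this recursion determines $w$ from $w(0)$, every radial solution is a scalar multiple of the one normalized by $w(0) = 1$, and $\bar v$ is bounded and positive whenever $v$ is.

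The crux is a telescoping identity built from the edge-counting relation
\[ k_+(r)\,S(r) = k_-(r+1)\,S(r+1), \]
which holds because both sides count the edges joining $S_r$ to $S_{r+1}$. Multiplying the recursion by $S(r)$ and setting $F(r) = k_+(r)\,S(r)\big(w(r+1) - w(r)\big)$, this relation converts the term $k_-(r)S(r)\big(w(r) - w(r-1)\big)$ into $F(r-1)$, so that $F(r) = F(r-1) + \lambda S(r) w(r)$. Telescoping from the base case $F(0) = \lambda w(0)$ yields the clean formula
\[ w(r+1) - w(r) = \frac{\lambda}{k_+(r)\,S(r)} \sum_{i=0}^r S(i)\,w(i). \]
In particular every increment is positive, so $w$ is strictly increasing and $w(i) \geq w(0)$ for all $i$.

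From here both directions follow by sandwiching the sum $\sum_{i=0}^r S(i) w(i)$ between $w(0) V(r)$ and $w(r) V(r)$, using monotonicity of $w$ and $V(r) = \sum_{i=0}^r S(i)$:
\[ \frac{\lambda V(r)}{k_+(r) S(r)}\,w(0) \leq w(r+1) - w(r) \leq \frac{\lambda V(r)}{k_+(r) S(r)}\,w(r). \]
The left inequality shows that if $\sum_r \frac{V(r)}{k_+(r) S(r)} = \infty$ then $w$, and hence every nonzero radial solution, is unbounded; since averaging would turn any bounded positive solution into a bounded radial one, no bounded positive solution exists and $G_{k_\pm}$ is stochastically complete. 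The right inequality gives $w(r+1) \leq \prod_{i=0}^r \big(1 + \frac{\lambda V(i)}{k_+(i) S(i)}\big) w(0)$, a product that converges precisely when $\sum_r \frac{V(r)}{k_+(r) S(r)} < \infty$; in that case $w$ is bounded, providing the required bounded positive solution for every $\lambda$ and establishing incompleteness.

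I expect the main obstacle to be justifying the reduction to radial functions cleanly — specifically, verifying that spherical averaging sends solutions to solutions of the stated recursion and that the internal-sphere edges genuinely cancel — together with recognizing the right telescoping quantity $F(r) = k_+(r)S(r)\big(w(r+1)-w(r)\big)$ whose evolution is governed by the edge-counting identity. Once that identity and $F$ are in hand, the monotonicity and the sandwich estimate are routine, and the product-versus-sum comparison is standard.
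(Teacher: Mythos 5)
Your proposal is correct and follows essentially the same route as the paper: spherical averaging (using the edge-counting identity $k_+(r)S(r) = k_-(r+1)S(r+1)$ to cancel the $m_0$ terms), the exact increment formula $w(r+1)-w(r) = \frac{\lambda}{k_+(r)S(r)}\sum_{i=0}^r S(i)w(i)$, and the same sandwich estimate with the product bound for convergence. The only difference is cosmetic: you derive the increment formula by telescoping the quantity $F(r)=k_+(r)S(r)\bigl(w(r+1)-w(r)\bigr)$, whereas the paper proves the identical formula by induction.
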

\begin{proof}
Let $v>0$ satisfy $(\Delta + \lambda)v(x)=0$ for $\lambda>0$ and $x \in G_{k_{\pm}}$.
By averaging over spheres, as above, it suffices to consider functions depending only of the distance from $x_0$.  That is, if $v$ satisfies the conditions above, then, using the identities $k_+(r)S(r) = k_-(r+1)S(r+1)$, it follows that $w(r) = \frac{1}{S(r)} \sum_{x \in S_r} v(x)$ satisfies $(\Delta + \lambda)w(r) = 0$ for all $r \geq 0$.  Note that, it is at this point that the terms involving $m_0(x)$ cancel out.  

Therefore, we only consider positive functions $w$ satisfying $(\Delta + \lambda) w(r) = 0$ for $\lambda>0$ and $r \geq 0$.  We claim that 
\begin{equation}\label{sincrements}
w(r+1) - w(r) = \frac{\lambda}{k_+(r) S(r)} \sum_{i=0}^r S(i) w(i).
\end{equation}
This follows by induction.  For $r=0$, we have $(\Delta + \lambda) w(0) = k_+(0)\big( w(0) - w(1) \big) + \lambda w(0) =0$ which implies that
\[ w(1) - w(0) = \frac{\lambda}{k_+(0)} w(0). \]

Now, for $r>0$, $(\Delta + \lambda) w(r) =0$ implies that $k_+(r)\big( w(r+1) - w(r) \big) = k_-(r) \big( w(r) - w(r-1) \big) + \lambda w(r) $.  Therefore, if (\ref{sincrements}) holds for $w(r) - w(r-1)$, then, applying $k_+(r-1)S(r-1) = k_-(r)S(r)$, we obtain
\begin{align*}
w(r+1) - w(r) &= \frac{k_-(r)}{k_+(r)} \big( w(r) - w(r-1) \big) + \frac{\lambda}{k_+(r)} w(r) \\
& = \frac{k_-(r)}{k_+(r)} \left( \frac{\lambda}{k_+(r-1) S(r-1)} \sum_{i=0}^{r-1} S(i) w(i) \right) + \frac{\lambda}{k_+(r)} w(r) \\
&= \frac{\lambda}{k_+(r) S(r)} \sum_{i=0}^{r-1} S(i) w(i) + \frac{\lambda}{k_+(r)} w(r) \\
& = \frac{\lambda}{k_+(r) S(r)} \sum_{i=0}^r S(i)w(i)
\end{align*}
establishing (\ref{sincrements}).

In particular, (\ref{sincrements}) implies that $w(r+1) > w(r)$ for all $r \geq 0$ so that the increments $w(r+1) - w(r)$ can be estimated as follows:
\[ \frac{\lambda V(r)}{k_+(r) S(r)} w(0) \leq w(r+1) - w(r) \leq \frac{\lambda V(r)}{k_+(r)S(r)} w(r). \]
Therefore, if $\sum_{r=0}^\infty \frac{V(r)}{k_+(r) S(r)} = \infty$, then $\sum_{r=0}^\infty \big( w(r+1) - w(r) \big) = \infty$ and $w$ is unbounded.  On the other hand,
\[ w(r+1) \leq \left( 1 + \frac{\lambda V(r)}{k_+(r) S(r)} \right) w(r) \leq \prod_{i=0}^r \left( 1 + \frac{\lambda V(i)}{k_+(i)S(i)} \right) w(0)  \]
so that, if $\sum_{r=0}^\infty \frac{V(r)}{k_+(r)S(r)} < \infty$, then $w$ is bounded.
\end{proof}

We now illustrate the theorem by giving a series of examples.
\begin{ex} \label{modelgraphsex}
For the case of spherically symmetric trees, $k_+(r) = k(r)$, the branching number, and $k_-(r) = 1$.  Furthermore, $k_+(r)S(r) = S(r+1)$, so that Theorem \ref{sgraphs} implies that $T_k$ is stochastically incomplete if and only if
\begin{equation}\label{strees1}
 \sum_{r=0}^\infty \frac{V(r)}{S(r+1)} < \infty. 
\end{equation}
For such trees,  $S(r) = \prod_{i=0}^{r-1} k_+(i)$ and, by the limit comparison test for series, (\ref{strees1}) is equivalent to 
\begin{equation}\label{strees2}
\sum_{r=0}^\infty \frac{1}{k_+(r)} < \infty 
\end{equation}
which was obtained as Corollary 3.8 in \cite{Woj09} and, as a special case, in Theorem \ref{doubletrees}.

One can extend these examples by connecting any number of vertices on a particular sphere $S_r$ in $T_k$ and this has no effect on the stochastic completeness.  That is, let $T_k \subset G_k$, where $G_k$ is obtained by connecting some number of vertices on each sphere $S_r$.  Then, Theorem \ref{sgraphs} shows that $G_k$ is stochastically incomplete if and only if 
\begin{equation} \label{modelgraphs}
\sum_{r=0}^\infty \frac{V(r)}{S(r+1)} < \infty. 
\end{equation}

Furthermore, by applying heat kernel comparison theorems, it follows that the heat kernels on $T_k$ and $G_k$ are equal.  Specifically, it is easy to see that the heat kernel on $T_k$ depends only on the distance from $x_0$, that is, $p_t^{T_k}(x_0, x) = p_t^{T_k}(r)$ for all $x \in S_r$.  Then, applying Theorem 3.11 in \cite{Woj09}, it follows that $p_t^{G_k} (x_0, x) = p_t^{T_k}(r)$ for all $x \in S_r$, where $p_t^{G_k}(x_0,x)$ denotes the heat kernel on $G_k$.  This gives another proof of the fact that $G_k$ is stochastically incomplete if and only if $T_k$ is, since $G_k$ and $T_k$ have the same set of vertices.

We believe that the graphs $G_k$ are the analogues of model manifolds and the criterion (\ref{modelgraphs}) corresponds to the condition $\int^\infty \frac{V(r)}{S(r)} dr < \infty$ in (\ref{model}).  Specifically, $dV(r) := V(r+1) - V(r) = S(r+1)$ plays the role of $V^\prime(r) = S(r)$ from the manifold case.  The essential point for the correspondance, apart from the spherical symmetry, is that each vertex in $G_k$ has a unique shortest path connecting it to the origin vertex $x_0$, which is also the case for model manifolds.

\begin{rem}  It is surprising that the criteria (\ref{strees1}) and (\ref{strees2}) for stochastic incompleteness apply to $G_k$ as well as to $T_k$.  For example, take $T_k$ with $k_+(r) = k(r) = (r+1)^2$ so that $T_k$ is stochastically incomplete with $S(r) = (r!) ^2$.  Now, connect each vertex $x \in S_r$ to every other vertex in $S_r$ to obtain $G_k$.  Then, at $x \in S_r \subset G_k$, $m_+(x) = k_+(r) = (r+1)^2$, while $m_0(x) = ( r! )^2 - 1$ so, probabilistically, the particle is much more likely to remain on the sphere $S_r$ than go outwards.  On the other hand, Theorem \ref{doubletrees} shows that adding only $\tilde{k}(r)=r+1$ end vertices at $x$ does have the effect of trapping the particle.

This contrast can be understood in light of the following model for the diffusion process governed by the heat kernel.  The direction of each jump is chosen randomly with probabilities as in the case of the simple random walk on the graph; however, the holding time of the particle at a vertex is a random variable whose exponential distribution depends on the valence of the vertex.  Specifically, if the particle undergoing the diffusion is at a vertex $x$ at time $s$, then it will jump, after a random time, to one of the neighbors of $x$ with probability $\frac{1}{m(x)}$.  Furthermore, the probability that it has not moved from the vertex $x$ at time $s+t$ is $e^{-t m(x)}$.  See \cites{Chu60} for more details on this model of the diffusion process.

Therefore, in the example above, although the particle will most likely jump to another vertex on the same sphere, the factorial valence at that vertex implies that the particle will not remain there for long.  In the case of adding end vertices, the holding time of the particle at a vertex of valence one is expected to be longer then at a vertex of high valency and this explains why, in this case, the particle is slowed down and explosion is prevented.  \end{rem}
\end{ex}

\begin{ex} \label{sphericalgraphs}
Let $S(r)$ be given with $S(0) = 1$.  We then connect every vertex in $S_r$ to every vertex in $S_{r+1}$ for all $r \geq 0$.  Such graphs are spherically symmetric with $k_+(r) = S(r+1)$ and $k_-(r) = S(r-1)$ and we denote them by $G_S$.  Theorem \ref{sgraphs} then implies that $G_S$ is stochastically incomplete if and only if
\begin{equation}\label{spheregraphs}
\sum_{r=0}^\infty \frac{\sum_{i=0}^r S(i)}{S(r+1) S(r)} < \infty.
\end{equation}

This allows us to create many examples of stochastically incomplete graphs with polynomial volume growth.  For example, letting $S(r) = (r+1)^3$ then, by (\ref{spheregraphs}), $G_S$ is stochastically incomplete with $V(r) = \frac{(r+1)^2(r+2)^2}{4}.$  Moreover, these examples show that the condition $\sum_{r=0}^\infty \frac{1}{k_+(r)} < \infty$ which is sufficient for the stochastic incompleteness of trees is not, in general, sufficient by considering the case of $k_+(r) = S(r+1) = (r+2)^2$ which is complete.

Furthermore, given any graph, not necessarily spherically symmetric, whose spheres satisfy (\ref{spheregraphs}) for some vertex $x_0$, one can create a spherically symmetric stochastically incomplete graph by connecting all vertices in $S_r$ to all vertices in $S_{r+1}$ for $r\geq 0$.  This operation, where we add edges but do not change the vertex set to obtain a stochastically incomplete graph from a complete one, is complementary to the procedure described in Section \ref{subgraphs section}, where we added vertices and edges to create a complete graph from an incomplete one.  On the other hand, in Example \ref{modelgraphsex} we discussed how adding edges along the same sphere has no effect on stochastic completeness.
\end{ex}

\begin{ex}
One can also extend our techniques to the case of the weighted Laplacian as found in \cite{Dod06} and, as a special case, in \cite{KelLen09}.  For example, consider the weighted graph $\mathbb{W}_a$ with vertex set $V = \{ 0, 1, 2, \ldots \}$, edge set $\{ (r,r+1) \ | \ r = 0,1,2, \ldots \}$, and edge weights $a(r) = a_{r, r+1}$.  Consider a function $v$ on the vertices of $\mathbb{W}_a$ satisfying $(A + \lambda)v(r)= 0$ for $r=0,1,2, \ldots$ where $A$ denotes the weighted Laplacian.  It follows that $v(1) - v(0) = \frac{\lambda}{a(0)} v(0)$ and one shows by induction that
\[ v(r+1)-v(r) = \frac{\lambda}{a(r)} \sum_{i=0}^r v(i). \]
Therefore, estimating as in Theorem \ref{sgraphs}, we have that $v$ will be bounded if and only if
\[ \sum_{r=0}^\infty \frac{r}{a(r)} < \infty \]
which is a special case of Theorem \ref{sgraphs} where $S(r)=1$ and $k_+(r) = a(r).$
\end{ex}

\section{Concluding Remarks}
\subsection{Volume Growth} 
We have shown that, with respect to volume growth, stochastically incomplete manifolds and graphs exhibit quite different behavior.  However, we still believe that a general criterion for stochastic completeness of graphs in terms of volume growth, in analogy to Grigor{\cprime}yan's result, should exist.  On the other hand, as mentioned previously, in \cite{BarBes09}, it is shown that the criterion for stochastic completeness of model manifolds, that is, $\int^\infty \frac{V(r)}{S(r)} dr = \infty$, does not imply stochastic completeness of general manifolds.  Furthermore, it is also shown in \cite{BarBes09} that $\int^\infty \frac{V(r)}{S(r)} dr < \infty$ does not, in general, imply stochastic incompleteness.  We are already in a position to prove the analogous statements for graphs.  

First, we show that the condition $\sum_{r=0}^\infty \frac{V(r)}{S(r+1)} = \infty$ does not imply stochastic completeness.  For this, take $G_S$ with $S(r) = (r+1)^3$ in Example \ref{sphericalgraphs}.  This example is stochastically incomplete but $V(r) > S(r+1)$ for $r$ large.  

To show that $\sum_{r=0}^\infty \frac{V(r)}{S(r+1)} < \infty$ does not imply stochastic incompleteness, consider Example \ref{completionex} where we take $T_k^{\tilde{k}}$ with $k(r)=(r+1)^2$ and $\tilde{k}(r)=r+1$.  By Theorem \ref{doubletrees}, this example is stochastically complete.  Let $S(r)$ and $V(r)$ denote the area of the sphere and volume of the ball in $T_k$ while $\widetilde{S}(r)$ and $\widetilde{V}(r)$ denote the corresponding quantities in $T_k^{\tilde{k}}$.  Then $S(r) = (r!) ^ 2$ and $V(r) = \sum_{i=0}^r S(i)$ and, from Theorem \ref{sgraphs}, we know that $\sum_{r=0}^\infty \frac{V(r)}{S(r+1)} < \infty.$  On, $T_k^{\tilde{k}}$ we have that $\widetilde{S}(r) = S(r) + r S(r-1)$ where we set $S(-1) =0$.  Therefore, $\widetilde{V}(r) = \sum_{i=0}^r \widetilde{S}(i) = V(r) + \sum_{i=0}^r  i S(i-1)$ and we have that
\begin{align*}
\sum_{r=0}^\infty \frac{\widetilde{V}(r)}{\widetilde{S}(r+1)} &= \sum_{r=0}^\infty \frac{V(r) + \sum_{i=0}^r  i S(i-1)}{S(r+1) + (r+1) S(r)} \\
& < \sum_{r=0}^\infty \frac{V(r)}{S(r+1)} +  \sum_{r=1}^\infty \frac{ r V(r-1)}{(r+1)S(r)} < \infty.  
\end{align*}

\subsection{Discretization and Rough Isometries}
There is a well-known discretization procedure originating in the works of M. Kanai in which a graph is associated to a Riemannian manifold in such a way that many properties of the graph reflect those of the manifold \cites{Kan85, Kan86}.  This procedure is used to show that certain properties of manifolds are preserved by rough isometries.  For example, in \cite{Kan86}*{Theorem 1}, Kanai shows that rough isometries preserve the property of a manifold to be parabolic.  The proof uses a graph to approximate the manifold, shows that parabolicity is preserved during the discretization, and then shows that rough isometries preserve the parabolicity of graphs.  The same technique is used in \cite{Kan85} to show that volume growth is preserved by rough isometries.  

However, a crucial assumption in Kanai's construction is that the manifolds and graphs have \emph{bounded geometry}.  For manifolds, this means, in particular, that the Ricci curvature is bounded from below and for graphs this means that the valence is uniformly bounded from above.  As we have mentioned, these assumptions automatically imply that the spaces in question are stochastically complete and Kanai's discretization scheme does not apply in our case.  In particular, in Example \ref{completionex}  the graphs $T_k$ and $T_k^{\tilde{k}}$ are roughly isometric but $T_k$ is stochastically incomplete while $T_k^{\tilde{k}}$ is stochastically complete.

\subsection{Quantum Graphs}
As mentioned previously, Grigor{\cprime}yan's volume criterion (\ref{Gri})  was extended to the more general setting of Dirichlet spaces by Sturm \cite{St94}.  In particular, \emph{metric} or \emph{quantum} graphs are covered by this extension.  Therefore, we have shown that, with respect to volume growth, solutions of the heat equation behave differently on quantum and discrete graphs.  This is also the case for solutions of the wave equation where, in the quantum setting, as on manifolds, solutions of the wave equation have finite propagation speed, in contrast to the case of discrete graphs.  See \cite{FriTil04} for details.


\subsection*{Acknowledgment}
I would like to thank many people with whom I've had helpful and inspiring discussions while working on the material in this article.  In particular, I would like to thank J{\'o}zef Dodziuk, Pedro Freitas, Sebastian Haeseler, Matthias Keller, Daniel Lenz, Erin Pearse, Florian Sobieczky, and Jean-Claude Zambrini.
\end{document}